\let\old@ssect\@ssect % Store how ifacconf defines \@ssect
\let\theoremstyle\undefined
\def\@ssect#1#2#3#4#5#6{%
  \NR@gettitle{#6}% Insert key \nameref title grab
  \old@ssect{#1}{#2}{#3}{#4}{#5}{#6}% Restore ifacconf's \@ssect
}
  \newcommand{\myrowcolour}{\rowcolor[gray]{0.925}}
\pgfplotsset{compat=newest}
\DeclareMathAlphabet{\mymathbb}{U}{BOONDOX-ds}{m}{n}
\newcommand{\R}{\mathbb{R}}
\newcommand{\bs}[1]{\boldsymbol{#1}}
\newcommand{\bb}[1]{\mymathbb{#1}}
\newcommand{\mc}[1]{\mathcal{#1}}
\newcommand{\mt}[1]{\textrm{#1}}
\newcommand{\lt}{\left}
\newcommand{\rt}{\right}
\newcommand{\beeq}{\begin{equation}}
\newcommand{\eneq}{\end{equation}}
\newcommand{\matb}{\begin{matrix}}
\newcommand{\mate}{\end{matrix}}
\DeclareMathOperator*{\argmin}{arg\,min}
\newcounter{thms}
\newtheorem{theorem}[thms]{Theorem}
\newtheorem{lemma}[thms]{Lemma}
\newtheorem{assumption}[thms]{Assumption}
\theoremstyle{definition}
\theoremstyle{plain}
\newtheorem*{remark}{Remark}
\definecolor{mygreen}{rgb}{0.1, 0.7, 0.1}
\definecolor{bluetmp}{rgb}{0.0, 0.0, 0.0}
\begin{document}
\begin{frontmatter}

\title{Minimal regret state estimation of time-varying systems}
\date{\nth{13} December 2021}

\author[First]{Jean-S\'ebastien Brouillon}
\author[Fourth]{Florian D\"orfler}
\author[First]{Giancarlo Ferrari Trecate} 

\address[First]{Institute of Mechanical Engineering, 
	\'Ecole Polytechnique F\'ed\'erale de Lausanne (EPFL), 
	CH-1015 Lausanne, Switzerland (e-mail: jean-sebastien.brouillon@epfl.ch, giancarlo.ferraritrecate@epfl.ch)}
\address[Fourth]{Automatic Control Laboratory, 
	Swiss Federal Institute of Technology (ETH), 
	Zurich, Switzerland 
	(dorfler@control.ee.ethz.ch)}

%%% Add PDF metadata to help others organize their library
%%% Once the PDF is generated, you can check the metadata with
%%% $ pdfinfo template.pdf
\hypersetup{
pdftitle={Bayesian Error-in-Variables Models for the Identification of Distribution Grids},
pdfauthor={Jean-Sébastien Brouillon, Emanuele Fabbiani, Pulkit Nahata, Florian Dörfler, Giancarlo Ferrari-Trecate},
pdfkeywords={Bayesian inference, Distribution grids, Error-in-variables, Line admittance estimation, Power systems identification},
}
\thanks{This research is supported by the Swiss National Science Foundation under the NCCR Automation (grant agreement 51NF40\_180545).}

\begin{abstract}
Kalman and $\mc H_\infty$ filters, the most popular paradigms for linear state estimation, are designed for very specific specific noise and disturbance patterns, which may not appear in practice. State observers based on the minimization of regret measures are a promising alternative, as they aim to adapt to recognizable patterns in the estimation error. In this paper, we show that the regret minimization problem for finite horizon estimation can be cast into a simple convex optimization problem. For this purpose, we first rewrite linear time-varying system dynamics using a novel system level synthesis parametrization for state estimation, capable of handling both disturbance and measurement noise. We then provide a tractable formulation for the minimization of regret based on semi-definite programming. Both contributions make the minimal regret observer design easily implementable in practice. Finally, numerical experiments show that the computed observer can significantly outperform both $\mc H_2$ and $\mc H_\infty$ filters.
\end{abstract}

% keywords can be removed

\begin{keyword}
minimal regret, observer design, state estimation
\end{keyword}

\end{frontmatter}

\section{Introduction}

A key contribution on the problem of estimating the state and predicting future outputs of a linear system was provided in the seminal paper by \cite{kalman_og}. This work provided the basis for $\mc H_2$ (i.e., least squares error) state estimation, which had a tremendous success in aerospace, and in many other applications later. However, for several industrial processes, the performance of $\mc H_2$ filtering can be questionable due to the poor modeling of the systems involved \citep{book_kalman_hing}. In several industrial plants, the process noise (also called disturbance) often includes different forms of uncertainty, therefore rendering the Gaussian assumption flawed. This has motivated the industry to also turn towards estimators optimized for worst case scenarios ($\mc H_\infty$), as described in the pioneering work of \cite{zames_hinf_og}. \textcolor{bluetmp}{Another approach introduced in \cite{luenberger} consists in stabilizing the error dynamics, such that errors in the initial state estimate are eventually rejected by the estimator. This approach is much simpler to implement but lacks optimal performance or robustness.} Today, the $\mc H_2$, $\mc H_\infty$, and Luenberger observers are the dominating paradigms in industrial applications \citep{book_kalman_hing}.

The $\mc H_\infty$ filter has proven to work well in practice, and it adresses the reliance of $\mc H_2$ filters on Gaussian noise and known system dynamics. However, doing so, it loses the optimality properties of $\mc H_2$ methods. The tradeoff between optimal performance and robustness can be optimized using mixed $\mc H_2$/$\mc H_\infty$ methods \citep{cont_time_mixed_h2hinf, disc_time_mixed_h2hinf}. However, the relative weights of both components are not tuned according to the disturbance patterns affecting the system, but need to be set \emph{a-priori}. In order to optimize the benefits and drawbacks of each method automatically, one can use a measure of regret compared to an observer that would know the future in advance.

Regret is the difference of performance between a practical observer and its ideal counterpart, which would anticipate all possible uncertainties a system will encounter. This ideal noncausal observer (often called clairvoyant, as it sees the future) is not implementable, but can give useful information on how to adapt to various disturbance profiles. The notion of regret has first been developed by economists \citep{regret_og}, for studying human interactions that are highly subject to changes and modelling errors. This motivation is also relevant for industrial processes, hence fostering the interest for the notion of regret measures in engineering. In the estimation setting, because the clairvoyant observer adapts to specific error patterns (e.g. constant, drifting, ...), it can compensate for errors and simplifications made during modeling. Hence, minimizing the worst case difference with this observer allows reducing the conservativeness of $\mc H_\infty$ filters, while keeping the greater robustness than $\mc H_2$.%by down-weighing scenarios where recent disturbances of the system are not fully observable. %Due to lag or current parameters.

In recent years, researchers in the field of estimation have attempted to design new observers with reduced regret. The idea of finding a versatile estimator that adapts to both Gaussian and worst-case noise through regret appears in \cite{vanli_better_regret_robust}. Nevertheless, this and most other works aim to improve regret bounds for given estimation methods. Among the works using regret measures for observer design, \cite{ouhamma_better_regret_ridge} provide an improvement to recursive ridge regression, and \cite{gharbi_mhe_regret_bound} present an Moving Horizon Estimation (MHE) method, which is suboptimal in terms of the cost function, but has a lower regret bound than the optimal estimator. The work of \cite{agrawal_improving_regret_algo} makes a step towards regret-optimal estimation, as it gives an iterative algorithm that improves the regret bound at each iteration. However, the solution presented by the authors is iterative and only optimizes regret asymptotically. So far, only \cite{babak_estim} and \cite{goel2021regret} have directly computed a state observer that minimizes regret. However, their method is only $\gamma$-suboptimal and requires specific conditions for a solution to exist. \cite{didier2022system} and \cite{martin2022safe} address this limitation in the context of regret-based controller design by formulating the minimization of regret in a convex way, and provide a simple way to introduce constraints. However, their approach cannot be applied in a straightforward way to state estimation problems, due to the presence of measurement noise.

In this paper, we provide a new convex formulation of the regret minimization problem with both process and measurement noise, which allows computing the regret-optimal observer using standard optimization tools. This formulation relies on a new parametrization adapting System Level Synthesis (SLS) \citep{sls_og_paper} for the state estimation problem. We test this observer on several systems, and show that it can improve the state estimation error compared to Kalman/$\mc H_2$ and $\mc H_\infty$ filters for various deterministic and stochastic disturbance profiles. Similar formulations for controller design show that it is possible to introduce constraints, although this aspect will not be discussed in this paper.

The paper proceeds as follows. Section \ref{sec_theory} states the problem and provides the necessary tools and assumptions. Section \ref{sec_estim} shows how to compute the $\mc H_2$, $\mc H_\infty$, clairvoyant, and regret-optimal observers in the SLS framework. Section \ref{sec_results} discusses the numerical results, and Section \ref{sec_conclu} concludes the paper.

\subsection{Preliminaries and notations}
Time indices are denoted by subscripts (e.g., $x_t$), and boldface letters denote the stacked vectors at all times, e.g. $\bs x = [x_0^\top, \dots, x_T^\top]^\top$ for a window $[0,T]$. Similarly, calligraphic letters denote linear operators applying to such stacked vectors (e.g. $\mc A$ such that $\bs y = \mc A \bs x$). The $\ell_2$-norm of a vector is denoted by $\|\cdot\|_2$, which also denotes the spectral norm of a matrix (its largest eigenvalue). The Frobenius norm of a matrix is denoted by $\|\cdot\|_F$. Blackboard bold letters denote sets (e.g. $\bb R$ for all real numbers), except $\bb E$ that denotes the expectation of a random variable.

\section{System model and problem statement}\label{sec_theory}

\subsection{State space equations}
We consider a linear time-varying system given by the equations
\begin{subequations}\label{eq_model_system}
\begin{align}
    x_{t+1} &= A_{t} x_{t} + B_{t} u_{t} + w_{t}, \\
    y_{t} &= C_{t} x_{t} + v_{t},
\end{align}
\end{subequations}
where $x_t \in \bb R^n$ is the state, $u_t \in \bb R^p$ the input, and $y_t \in \bb R^m$ the output for each time instant $t = 0, \dots, T$. The system dynamics are characterized by the time-varying parameters $A_t \in \R^{n \times n}$, $B_{t} \in \R^{n \times p}$, and $C_t \in \R^{m \times n}$, and subject to disturbance $w_t$ and measurement noise $v_t$.

The only restriction that we make on $w_t$ and $v_t$ are the following.

\begin{assumption}[Disturbance boundedness]\label{ass_bounded}
%full dimensonal ellipsoids
There exists non-empty full-dimensional\footnote{The ellipsoids $\bb v$ and $\bb w$ contain a $m$- and $n$-dimensional ball, respectively.} ellipsoids $\bb v = \{v : \|H_v v\|_2 \leq 1\}$ and $\bb w = \{w : \|H_w w\|_2 \leq 1\}$ such that $v_t \in \bb v$ and $w_t \in \bb w$ for all $t = 0,\dots,T$, respectively.
\end{assumption}

%This assumption means that even if $v_t$ and $w_t$ are bounded in a very complex set, for simplicity, we consider an enclosing ellipsoid that spans over every dimension. 
Assumption \ref{ass_bounded} is common in the field of $\mc H_\infty$ filtering for linear systems \cite{}. However, it can lead to poor performance as $\mc H_\infty$ state observers and filters are designed to mitigate worst case disturbances, even if they are not encountered during the system operation. This paper aims to provide a more versatile approach that adapts to the encountered disturbance patterns for better performance.

\subsection{Estimation problem}
\textcolor{bluetmp}{We are interested in computing the state predictions $\hat x_{t+1}$ of $x_{t+1}$ for $t = 0, \dots, T$, when only knowing $y_{t}$ and $u_t$ but not $v_t$ and $w_{t}$, and minimizing the total cost}
\begin{align}\label{eq_cost_def}
    \mt{cost}(\hat x_1 - x_1, \dots, \hat x_{T+1} - x_{T+1}) = \sum_{t=1}^{T+1} \ell_t(\hat x_t - x_t),
\end{align}
where $\ell_t$ is a given loss function (e.g. least squares). To do so, we design a dynamic Luenberger-type observer defined by the gains $L_{\tau|t} \in \bb R^{n \times m}$ for $\tau = 0, \dots, t$, and given by

\begin{align}\label{eq_model_def}
    \hat x_{t+1} = A_t \hat x_t + B_{t} u_{t} - \sum_{\tau=0}^t L_{\tau|t} (C_{\tau} \hat x_{\tau} - y_{\tau}).
\end{align}

The gains are chosen to stabilize the prediction error $e_t = \hat x_t - x_t$ folloing the dynamics
\begin{align}\label{eq_model_error_sys_noopt}
    e_{t+1} = A_t e_t - w_t - \sum_{\tau=0}^t L_{\tau|t} (C_{\tau} e_{\tau} - v_{\tau}),
\end{align}
while minimizing the cost \eqref{eq_cost_def}.

\begin{remark}
We highlight that the error on the initial state is embedded without loss of generality in the first disturbance, i.e. $w_0 = A_0 (\hat x_0 - x_0)$ and $e_0 = 0$.
\end{remark}

Substituting \eqref{eq_model_error_sys_noopt} in the cost \eqref{eq_cost_def} gives the final problem
\begin{align}\label{eq_model_error_sys}
    \argmin_{\bs v, \bs w, \{ L_{\tau|t}: t \in [0,T], \tau \in [0,t]\}, e} &\; cost(\bs e), \\ \vspace{-2px}
    \mt{s.t. }
    e_{t+1} &= A_t e_t - w_t -\! \sum_{\tau=0}^t L_{\tau|t} (C_{\tau} e_{\tau} \!-\! v_\tau),\nonumber
    \\
    e_0 &= 0, \nonumber
\end{align}
where
\begin{align}
    \bs v = \lt[\matb v_0 \\ v_1 \\ \vdots \\ v_T \mate\rt],
    \bs w = -\lt[\matb A_0 e_0 \\ w_1 \\ \vdots \\ w_T \mate\rt], \bs e = \lt[\matb e_1 \\ \vdots \\ e_T \\ e_{T+1} \mate\rt].
\end{align}
\begin{remark}
While in the sequel we focus on the finite horizon estimation problem \eqref{eq_model_error_sys}, we notice that it provides the building block for defining MHE methods that are applicable for $t=0,\dots,\infty$ \citep{mhe_alessandri,mhe_ferrari2002, mhe_farina}.
\end{remark}

We also define the set $\bb V$ such that each element $\bs v_i$ of $\bs v \in \bb V$ is in $\bb v$, and similarly, the set $\bb W$. From Assumption \ref{ass_bounded}, it follows that $\bb V$ and $\bb W$ are non-empty full-dimensional ellipsoids, and can therefore be written as $\bb V = \{ \bs v | \|\mc H_v \bs v\|_2 \leq 1\}$ and $\bb W = \{ \bs w | \|\mc H_w \bs w\|_2 \leq 1\}$, where $\mc H_v$ and $\mc H_w$ are square invertible block-diagonal matrices.

\subsection{System level synthesis}

In order to formulate a convex problem to compute all $L_{\tau|t}$ minimizing \eqref{eq_cost_def}, we introduce a parametrization similar to SLS \citep{sls_og_paper}, which was proposed for controller design. We will show the necessary modifications for addressing observer design in the this section.

Let $\mc Z$ be the block-downshift operator, namely a matrix with identity matrices along its first block sub-diagonal and zeros elsewhere, and let $\mc A = \mt{blkdiag}(A_{0}, \dots, A_T, 0_{n\times n})$ and $\mc C = \mt{blkdiag}(0_{m\times n}, C_{0}, \dots, C_T)$.
For $t = 0, \dots, T$, the model \eqref{eq_model_error_sys} is equivalent to
\begin{align}\label{eq_error_sys_aug}
    \bs e = \mc Z \mc A \bs e - \mc L \mc C \mc Z \bs e + \mc L \bs v + \bs w,
\end{align}
where
\begin{align}\label{eq_def_L_matrix}
    \mc L &= \lt(\matb L_{0|0} & 0 & \cdots & 0 \\  L_{0|1} & L_{1|1} & \ddots &\vdots \\ \vdots & \vdots & \ddots & 0 \\ L_{0|T} & L_{1|T} & \cdots & L_{T|T} \mate\rt)\!.
\end{align}

\begin{remark}
\textcolor{bluetmp}{
With this notation, a standard Luenberger observer is be modeled as a block-diagonal matrix with the same block $L_{t|t} = L$ repeated in all diagonal elements.
%The block $L$ corresponds to the observer designed with the method from \cite{luenberger}. 
A Kalman filter is modeled as a block-diagonal $\mc L$, in which each block $L_{t|t}$ is a pre-computed Kalman gain for time $t$. The block lower-triangular formulation adopted in \cite{babak_estim} and in this paper is more general and allows one to consider noise patterns with dependencies between time instants.
}
\end{remark}

The SLS model is defined by the matrices
\begin{subequations}\label{eq_def_phi}
\begin{align}\label{eq_def_phie}
    \Phi_w &= (I - \mc Z(\mc A - \mc L \mc C))^{-1}, \\
    \Phi_v &= (I - \mc Z(\mc A - \mc L \mc C))^{-1} \mc L,
    \label{eq_def_phil}
\end{align}
\end{subequations}
where, by construction, $\Phi_w$ and $\Phi_v$ satisfy \eqref{eq_error_sys_aug}, i.e.,
\begin{align}\label{eq_trajectory_phi}
	\bs e = \Phi_w \bs w + \Phi_v \bs v.
\end{align}
The lower block-triangular matrices $\Phi_v$ and $\Phi_w$ map the noise components $v$ and $w$ to their respective shares of the estimation error $\bs e$. The maps $\Phi_v$ and $\Phi_w$ will be called prediction error maps in the rest of the paper. In contrast, when SLS is used for controller design, the closed-loop state and input maps $\Phi_x$ and $\Phi_u$ give the state and input trajectories depending solely on $\bs w$ \cite{sls_og_paper}.

\begin{theorem}\label{thm_sls}
Consider the dynamics \eqref{eq_error_sys_aug}, there exist a unique matrix $\mc L$ satisfying \eqref{eq_def_phie} and \eqref{eq_def_phil} if and only if the following achievability condition is satisfied
\begin{align}\label{eq_feas_cond}
    \Phi_w (I - \mc Z \mc A) + \Phi_v \mc C \mc Z = I.
\end{align}
Moreover $\mc L = \Phi_w^{-1} \Phi_v$.
\end{theorem}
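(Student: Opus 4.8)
This is the state-estimation counterpart of the affine parametrization theorem of System Level Synthesis \citep{sls_og_paper}, so the plan is to follow that proof, the only genuinely new ingredient being the second map $\Phi_v$ that carries the measurement noise. I would establish the two implications and uniqueness together, the whole argument resting on one structural observation: because $\mc Z$ is the block downshift on a finite window, whenever $\mc Z$ multiplies a block lower-triangular operator, on either side, the product is \emph{strictly} block lower-triangular, hence nilpotent, so $I$ minus any such product is invertible via a terminating Neumann series.

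For the ``only if'' direction, fix a gain $\mc L$ with the causal pattern \eqref{eq_def_L_matrix}. Then $\mc A-\mc L\mc C$ is block lower-triangular (since $\mc A$ is block diagonal and $\mc L\mc C$ is lower-triangular), so $\mc Z(\mc A-\mc L\mc C)$ is nilpotent, $I-\mc Z(\mc A-\mc L\mc C)$ is invertible, and $\Phi_w,\Phi_v$ in \eqref{eq_def_phi} are well defined with $\Phi_v=\Phi_w\mc L$, hence $\mc L=\Phi_w^{-1}\Phi_v$. Substituting $\Phi_v=\Phi_w\mc L$ into the left-hand side of \eqref{eq_feas_cond} and collecting the shift terms, the bracket multiplying $\Phi_w$ becomes exactly the matrix $\Phi_w^{-1}$ prescribed by \eqref{eq_def_phie}, so the expression telescopes to $\Phi_w\Phi_w^{-1}=I$.

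For the ``if'' direction, let $\Phi_w,\Phi_v$ be lower block-triangular matrices satisfying \eqref{eq_feas_cond}. I would first show $\Phi_w$ is invertible: rewrite \eqref{eq_feas_cond} as $\Phi_w(I-\mc Z\mc A)=I-\Phi_v\mc C\mc Z$; here $I-\mc Z\mc A$ is unipotent because $\mc Z\mc A$ is strictly block lower-triangular, while $\Phi_v\mc C$ is block lower-triangular and right-multiplying it by the downshift $\mc Z$ makes $\Phi_v\mc C\mc Z$ strictly block lower-triangular, so $I-\Phi_v\mc C\mc Z$ is unipotent as well; hence $\Phi_w=(I-\Phi_v\mc C\mc Z)(I-\mc Z\mc A)^{-1}$ is a product of invertible matrices. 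Now set $\mc L:=\Phi_w^{-1}\Phi_v$; substituting $\Phi_v=\Phi_w\mc L$ back into \eqref{eq_feas_cond} and solving for $\Phi_w$ recovers \eqref{eq_def_phie}, and right-multiplying by $\mc L$ gives \eqref{eq_def_phil}. Uniqueness then comes for free: by \eqref{eq_def_phil} every admissible gain satisfies $\Phi_w\mc L=\Phi_v$, and invertibility of $\Phi_w$ forces $\mc L=\Phi_w^{-1}\Phi_v$.

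The step I expect to demand the most care is not any single computation but the index bookkeeping: one has to be sure that \eqref{eq_error_sys_aug} really encodes \eqref{eq_model_error_sys} with the chosen zero-padding of $\mc A$ and $\mc C$ and shape of $\mc Z$, and that rearranging \eqref{eq_feas_cond} reproduces \eqref{eq_def_phie} verbatim rather than up to a stray shift factor. Once that alignment is checked, the remainder is routine block-triangular linear algebra, and --- in contrast to the infinite-horizon or controller-design settings --- no stability, detectability, or well-posedness hypothesis is needed, since on a finite horizon every operator of the form $I-\mc Z M$ with $M$ block lower-triangular is automatically unipotent.
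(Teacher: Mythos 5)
Your proposal is correct and follows essentially the same strategy as the paper: establish invertibility of $\Phi_w$ from the strict block-lower-triangularity that $\mc Z$ induces, then rearrange algebraically to identify $\mc L = \Phi_w^{-1}\Phi_v$ with the defining relations. The only differences are cosmetic: you argue invertibility via nilpotence of $\mc Z\mc A$ and $\Phi_v\mc C\mc Z$ (writing $\Phi_w=(I-\Phi_v\mc C\mc Z)(I-\mc Z\mc A)^{-1}$), whereas the paper notes that \eqref{eq_feas_cond} forces the block-diagonal of $\Phi_w$ to be the identity so all eigenvalues equal one — the same structural fact; and you work directly with the operator identities, whereas the paper first passes from the trajectory equation \eqref{eq_error_sys_aug} to the matrix identity $\Phi_w=\mc Z\mc A\Phi_w-\mc L\mc C\mc Z\Phi_w+I$ by requiring it to hold for all admissible $\bs v,\bs w$ (which is where the full-dimensionality in Assumption \ref{ass_bounded} is used), and then conjugates by $\Phi_w$. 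One caution on the step you deliberately deferred: the ``alignment'' check does \emph{not} go through verbatim, because the printed formula \eqref{eq_def_phie} reads $(I-\mc Z(\mc A-\mc L\mc C))^{-1}=(I-\mc Z\mc A+\mc Z\mc L\mc C)^{-1}$, while the bracket your telescoping produces (and the one consistent with \eqref{eq_error_sys_aug}, \eqref{eq_trajectory_phi} and \eqref{eq_feas_cond}) is $(I-\mc Z\mc A+\mc L\mc C\mc Z)^{-1}$; since $\mc Z\mc L\mc C\neq\mc L\mc C\mc Z$ in general, this is a genuine discrepancy in the paper's stated definition rather than a flaw in your argument — the paper's own proof implicitly uses the latter form, as it derives it from \eqref{eq_error_sys_aug}. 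State explicitly that you take $\Phi_w,\Phi_v$ to be the maps defined by the error dynamics (equivalently $\Phi_w^{-1}=I-\mc Z\mc A+\mc L\mc C\mc Z$, $\Phi_v=\Phi_w\mc L$), and your two directions and the uniqueness argument are complete.
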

\begin{proof}
From \eqref{eq_def_phi}, one has that $\Phi_v = \Phi_w \mc L$, and \eqref{eq_error_sys_aug} can be rewritten as
\begin{align}\label{eq_error_sys_aug_w}
    \Phi_w(\bs w + \mc L \bs v) =\;& \mc Z \mc A \Phi_w (\bs w + \mc L \bs v) - \mc L \mc C \mc Z \Phi_w (\bs w + \mc L \bs v) 
    \nonumber \\
    & + (\bs w + \mc L \bs v),
\end{align}
which is satisfied for all $\bs w \in \bb W$ and $\bs v \in \bb V$ if and only if
\begin{align}\label{eq_error_sys_aug_no_w}
    \Phi_w =\;& \mc Z \mc A \Phi_w - \mc L \mc C \mc Z \Phi_w + I.
\end{align}

In order to prove the theorem, we only need to prove that $\Phi_w$ is invertible and that \eqref{eq_feas_cond} is equivalent to \eqref{eq_error_sys_aug_no_w} with $\mc L = \Phi_w^{-1} \Phi_v$.

First, $\Phi_w \mc Z \mc A$ is lower block-triangular with a zero block-diagonal because it is the product of (i) a lower block-triangular (ii) a block downshift operator, which zeroes the block-diagonal of $\Phi_w \mc Z$, and (iii) a block diagonal matrix which does not change the sparsity pattern. Moreover, $\Phi_v \mc C \mc Z$ follows the same pattern because $\Phi_v$ and $\mc C$ are also lower block-triangular and block-diagonal, respectively. Hence, the diagonal of $\Phi_w$ must be equal to the identity on the right hand side of \eqref{eq_feas_cond}, which means that all its eigenvalues are equal to one if \eqref{eq_feas_cond} is satisfied. This proves that $\Phi_w$ is invertible.

Second, by post-multiplying and pre-multiplying \eqref{eq_error_sys_aug_no_w} by $\Phi_w^{-1}$ and $\Phi_w$, respectively, we obtain
\begin{align}\label{eq_error_sys_aug_no_w_2}
    \Phi_w = \Phi_w \mc Z \mc A - \Phi_w \mc L \mc C \mc Z + I.
\end{align}
To obtain \eqref{eq_feas_cond} from \eqref{eq_error_sys_aug_no_w_2}, one must replace $\mc L$ by $\Phi_w^{-1} \Phi_v$ and rearrange the terms, which proves the theorem.
\end{proof}

Transforming the optimal estimation problem \eqref{eq_model_error_sys} into the SLS framework gives the following convex problem.
\begin{subequations}\label{eq_model_error_sls}
\begin{align}\label{eq_model_error_sls_c}
    \argmin_{\Phi_w, \Phi_v, \bs w, \bs v} \;\;& \mt{cost}(\Phi_w \bs w + \Phi_v \bs v), \\ \vspace{-2px}
    \mt{s.t. } &
    \Phi_w (I - \mc Z \mc A) + \Phi_v \mc C \mc Z = I, \label{eq_model_error_cons}\\
    &\Phi_w, \Phi_v \mt{ lower block-triangular}. \label{eq_model_error_spars}
\end{align}
\end{subequations}

The constraint \eqref{eq_model_error_spars} is added for the prediction error maps $\Phi_v$ and $\Phi_w$ to be causal, i.e. $e_t$ only depends on $v_0, \dots, v_t$ and $w_0, \dots, w_t$ but not on future noise. Hence, the observer $\mc L = \Phi_w^{-1} \Phi_v$ is also causal. At time $t$, $\mc L$ predicts future states $x_{t+1}, \dots, x_T$ only based on past and present measurements $y_1, \dots, y_t$.

\section{Minimal regret estimation}\label{sec_estim}
We will first explain how to derive two standard causal benchmarks using the SLS, namely the optimal $\mc H_2$ and $\mc H_\infty$ observers.
\subsection{Standard benchmarks}\label{subsec_standard}
\emph{$\mc H_2$ filtering}: $\mc H_2$ observers such as the Kalman filter are designed to minimize the mean square error when disturbances and noise are Gaussian. In mathematical terms, this means that the filter gains solve the problem \eqref{eq_model_error_sys} with
\begin{align}\label{eq_kalman_opt_c}
    \mt{cost}_2(e(\bs v, \bs w, \mc L)) &= \frac{1}{T} \sum_0^T \bb E_{[v_t, w_t] \sim \mc N(0,\Sigma_t) } \lt[\|e_t\|_2^2\rt],
\end{align}
where $\Sigma_t = \mt{blkdiag}( \Sigma_{v,t}, \Sigma_{w,t})$ is the covariance of the noise $v_t$ and the disturbance $w_t$, assumed Gaussian and uncorrelated in time. Similar to \cite{sls_og_paper} and by using \eqref{eq_trajectory_phi}, one has that \eqref{eq_kalman_opt_c} can be rewritten in terms of $\Phi_w$ and $\Phi_v$ as
\begin{align}\label{eq_kalman_opt_c_phi}
    \mt{cost}_2(\Phi_v, \Phi_w) &= \frac{1}{T} \lt\|[\Phi_v, \Phi_w] \lt[ \matb \Sigma_v^{\frac{1}{2}} & 0 \\ 0 & \Sigma_w^{\frac{1}{2}} \mate \rt] \rt\|_F^2.
\end{align}
where $\Sigma_v = \mt{blkdiag}(\Sigma_{v,0},\dots, \Sigma_{v,T})\; $ and $\;\Sigma_w = \mt{blkdiag}(\Sigma_{w,0},\dots, \Sigma_{w,T})$. Theorem \ref{thm_sls} shows that one can replace the problem \eqref{eq_model_error_sys} by \eqref{eq_model_error_sls}, which finally gives.
\begin{align}\label{eq_kalman_sls}
    \{\Phi_{v,2}, \Phi_{w,2}\} = \argmin_{\Phi_{v}, \Phi_{w}} \;\;& \lt\|[\Phi_v, \Phi_w] \lt[ \matb \Sigma_v^{\frac{1}{2}} & 0 \\ 0 & \Sigma_w^{\frac{1}{2}} \mate \rt] \rt\|_F^2,
    \\ \nonumber
    \mt{s.t.}&\mt{ \eqref{eq_model_error_cons}, \eqref{eq_model_error_spars}}.
\end{align}

The optimal $\mc H_2$ observer is the Kalman filter, which is such that $\mc L$ is block-diagonal. This means that, even if $\Phi_{w,2}$ and $\Phi_{v,2}$ can be dense lower block-triangular matrices, $\mc L_2 = \Phi_{w,2}^{-1} \Phi_{v,2}$ is block diagonal. This is not included as a hard constraint in \eqref{eq_kalman_sls}. The block-diagonality of $\mc L_2$ appears naturally when $w_t$ and $v_t$ are not correlated between different time instants, because the Kalman filter is proven to be optimal \citep{kalman_og}. The problem \eqref{eq_kalman_sls} is a relaxation of general $\mc H_2$ observers, which generalizes the Kalman filter and can therefore improve performance if the noise is correlated between different instants. This means that it gives a suitable benchmark for comparison. It will also be used later to formulate regret measures.

\emph{$\mc H_\infty$ filtering}: Similar to the $\mc H_2$ observer, \cite{book_kalman_hing} describes the $\mc H_\infty$ observer as a solution to \eqref{eq_model_error_sys} with the cost
\begin{align}\label{eq_hinf_opt_c}
    \mt{cost}_\infty(e(\bs v, \bs w, \mc L)) &= \frac{1}{T} \sum_{t=0}^T \max_{\substack{v_t \in \bb v \\ w_t \in \bb w}} \|e_t\|_2^2,
    \\ \nonumber
    &= \frac{1}{T} \max_{\substack{v | \|\mc H_v \bs v\|_2 \leq 1 \\ w | \|\mc H_w \bs w\|_2 \leq 1}}\|\bs e\|_2^2.
\end{align}
By using the change of variable $\bs v_n = \mc H_v \bs v$ and $\bs w_n = \mc H_w \bs w$, one can rewrite the cost in terms of $\Phi_v$ and $\Phi_w$ as
\begin{align}\label{eq_hinf_opt_changed}
    \mt{cost}_\infty(\Phi_v, \Phi_w) \!&=\! \frac{1}{T} \!\! \max_{\substack{\|\bs v_n\|_2 \leq 1 \\ \|\bs w_n\|_2 \leq 1}}
    \lt\|[\Phi_v, \Phi_w] \!\! \lt[ \matb \mc H_v^{-1} & 0 \\ 0 & \mc H_w^{-1} \mate \rt] \!\!\! \lt[ \matb \bs v_n \\ \bs w_n \mate \rt] \! \rt\|_2^2 \!\!,
\end{align}
where $\mc H_v$ and $\mc H_w$ are invertible because of Assumption \ref{ass_bounded}. According to the definition of spectral norms, and as proven by \cite{sls_og_paper}, \eqref{eq_hinf_opt_c} is equal to
\begin{align}\label{eq_hinf_opt_c_phi}
    \mt{cost}_\infty(\Phi_v, \Phi_w) &= \frac{1}{T} \lt\|[\Phi_v, \Phi_w] \!\! \lt[ \matb \mc H_v^{-1} & 0 \\ 0 & \mc H_w^{-1} \mate \rt] \! \rt\|_2^2.
\end{align}

Hence, the $\mc H_\infty$ observer is $\mc L_\infty = \Phi_{w, \infty}^{-1} \Phi_{v, \infty}$ with
\begin{align}\label{eq_hinf_sls}
    \{\Phi_{v, \infty}, \Phi_{w, \infty}\} \in \argmin_{\Phi_{v}, \Phi_{w}} \;\;& \lt\|[\Phi_v, \Phi_w] \!\! \lt[ \matb \mc H_v^{-1} & 0 \\ 0 & \mc H_w^{-1} \mate \rt] \! \rt\|_2^2,
    \\ \nonumber
    \mt{s.t.}&\mt{ \eqref{eq_model_error_cons}, \eqref{eq_model_error_spars}}.
\end{align}
Note that the problem \eqref{eq_hinf_sls} may admit multiple solutions because the cost only affects the largest eigenvalue (i.e. the spectral norm) but not the other eigenvalues, which are free to vary. 

\subsection{Clairvoyant estimator}\label{subsec_nc}

A non-causal observer does away with the causal constraint \eqref{eq_model_error_spars}, estimating states based on both past and future measurements. This means that the optimal clairvoyant observer is given by a gain matrix $\mc L_{nc} \in \bb R^{nT \times mT}$, which is full and such that the resulting estimation error minimizes the cost \eqref{eq_cost_def}. While not implementable, $\mc L_{nc}$ can give useful information about how an observer should react to noise. In what follows, we show how to explicitly compute $\mc L_{nc}$ using the same SLS framework as in Section \ref{subsec_standard}.

\begin{lemma}\label{lem_nc}
For quadratic, positive definite loss functions $\ell_t = e_t^\top Q_t e_t$ with $0 \prec Q_t \in \bb R^{n \times n}$, the prediction error maps $\Phi_{v, nc}, \Phi_{w, nc}$ of the clairvoyant observer $\mc L_{nc} = \Phi_{w,nc}^{-1} \Phi_{v, nc}$ are given in the SLS form by 
\begin{subequations}\label{eq_nc_prob}
\begin{align}\label{eq_nc_prob_fro}
    \!\!\{\Phi_{v,nc}, \Phi_{w,nc}\} \!&= \argmin_{\Phi_{v}, \Phi_{w}} \; \lt\|\mc Q [\Phi_v, \Phi_w] \! \lt[ \matb \Sigma_v^{\frac{1}{2}} & 0 \\ 0 & \Sigma_w^{\frac{1}{2}} \mate \rt] \rt\|_F^2,
    \\ \nonumber
    &\quad\quad\quad\quad \textnormal{s.t. \eqref{eq_model_error_cons}},
    \\ \label{eq_nc_prob_spec}
    &\; \in \argmin_{\Phi_{v}, \Phi_{w}} \lt\|\mc Q [\Phi_v, \Phi_w] \!\! \lt[ \matb \mc H_v^{-1}\! & 0 \\ 0 & \!\mc H_w^{-1} \mate \rt] \! \rt\|_2^2 \!\!\!,\!
    \\ \nonumber
    &\quad\quad\quad\quad \textnormal{s.t. \eqref{eq_model_error_cons}},
\end{align}
\end{subequations}
where $\mc Q = \mt{blkdiag}(Q_0^{\frac{1}{2}}, \dots, Q_{T+1}^{\frac{1}{2}})$. The corresponding optimal cost is
\begin{align}\label{eq_nc_cost}
%\!\min_{\mc L} \mt{cost}(\Phi_{v,nc} \bs v \!+\! \Phi_{w,nc} \bs w) 
\!\min_{\mc L} \mt{cost}(e(\bs v, \bs w, \mc L)) = \lt\|\mc Q [\Phi_{v,nc}, \Phi_{w,nc}] \!\! \lt[ \matb \bs v  \\  \bs w \mate \rt]\! \rt\|_2^2\!
\end{align}
\end{lemma}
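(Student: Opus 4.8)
The plan is to follow the same route as for the $\mc H_2$ and $\mc H_\infty$ benchmarks in Section \ref{subsec_standard}, the only change being that the clairvoyant observer is not required to be causal, so the sparsity constraint \eqref{eq_model_error_spars} is dropped. A clairvoyant observer is still a Luenberger-type observer \eqref{eq_model_def}, now with a full gain matrix $\mc L$; since the proof of Theorem \ref{thm_sls} invoked lower-triangularity only to conclude that $\Phi_w$ is invertible, once that is re-established for the minimizer (see below) its prediction error maps range exactly over the pairs $(\Phi_v,\Phi_w)$ satisfying the achievability constraint \eqref{eq_model_error_cons}, with $\bs e=\Phi_v\bs v+\Phi_w\bs w$ as in \eqref{eq_trajectory_phi}. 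For the quadratic loss, using $e_0=0$, one has $\mt{cost}(\bs e)=\sum_{t} e_t^\top Q_t e_t=\|\mc Q\bs e\|_2^2=\|\mc Q[\Phi_v,\Phi_w]\,[\bs v^\top,\bs w^\top]^\top\|_2^2$. Taking the expectation over Gaussian $(\bs v,\bs w)$ turns this into the weighted Frobenius objective of \eqref{eq_nc_prob_fro}, exactly as \eqref{eq_kalman_opt_c_phi} follows from \eqref{eq_kalman_opt_c}; taking the worst case over the ellipsoids $\bb V,\bb W$ turns it into the weighted spectral objective of \eqref{eq_nc_prob_spec}, exactly as \eqref{eq_hinf_opt_c_phi} follows from \eqref{eq_hinf_opt_c}; in both cases the only surviving constraint is \eqref{eq_model_error_cons}. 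Inserting the minimizer back into $\|\mc Q\bs e\|_2^2$ then yields \eqref{eq_nc_cost}.

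The heart of the argument is to show that one pair $(\Phi_{v,nc},\Phi_{w,nc})$ solves \emph{both} problems in \eqref{eq_nc_prob}. Because $\mc Z\mc A$ is strictly lower block-triangular, $I-\mc Z\mc A$ is invertible, so \eqref{eq_model_error_cons} is equivalent to $\Phi_w=(I-\Phi_v\mc C\mc Z)(I-\mc Z\mc A)^{-1}$; hence the feasible set is the affine family $[\Phi_v,\Phi_w]=\Gamma_0+\Phi_v\Gamma_1$ parametrized freely by $\Phi_v$, where $\Gamma_0=[0,\,(I-\mc Z\mc A)^{-1}]$ and $\Gamma_1=[\,I,\ -\mc C\mc Z(I-\mc Z\mc A)^{-1}]$ has full row rank thanks to its leading identity block. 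Both problems in \eqref{eq_nc_prob} therefore take the form $\min_{\Phi_v}\|\mc Q(\Gamma_0+\Phi_v\Gamma_1)W\|$ for an invertible weight $W$, i.e. $\min_F\|G_0+FG_1\|$ after setting $F=\mc Q\Phi_v$, $G_0=\mc Q\Gamma_0W$, $G_1=\Gamma_1W$ (still full row rank). The least-squares solution $F^\star=-G_0G_1^{+}$ has residual $M:=G_0(I-G_1^{+}G_1)$ with $MG_1^\top=0$, so every feasible perturbation $K=(F-F^\star)G_1$ satisfies $MK^\top=0$; and whenever $MK^\top=0$ one has $\|M+K\|\ge\|M\|$ for \emph{every} unitarily invariant norm, since $(M+K)(M+K)^\top=MM^\top+KK^\top\succeq MM^\top$ forces $\sigma_i(M+K)\ge\sigma_i(M)$ for all $i$. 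Hence the same $F^\star$, and therefore the same $\Phi_{v,nc}$ and $\Phi_{w,nc}=(I-\Phi_{v,nc}\mc C\mc Z)(I-\mc Z\mc A)^{-1}$, minimizes the Frobenius objective and the spectral objective at once, which is exactly the two-line statement of \eqref{eq_nc_prob}.

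The step I expect to be the main obstacle is making the two least-squares problems \emph{literally} share this minimizer: $F^\star=-\mc Q\Gamma_0\,WW^\top\Gamma_1^\top(\Gamma_1WW^\top\Gamma_1^\top)^{-1}$ depends on $W$ only through $WW^\top$, so one must check that the Frobenius weight $\mt{blkdiag}(\Sigma_v^{1/2},\Sigma_w^{1/2})$ and the spectral weight $\mt{blkdiag}(\mc H_v^{-1},\mc H_w^{-1})$ induce the same $WW^\top$ --- that is, that the noise covariances are the inverse shape matrices of the uncertainty ellipsoids of Assumption \ref{ass_bounded} --- for the equality-then-membership claim in \eqref{eq_nc_prob} to hold; and one must verify that the resulting $\Phi_{w,nc}$ is invertible, so that $\mc L_{nc}=\Phi_{w,nc}^{-1}\Phi_{v,nc}$ is well defined, which is visible from the closed form $[\Phi_{v,nc},\Phi_{w,nc}]=\Gamma_0(I-P)$ with $P=WW^\top\Gamma_1^\top(\Gamma_1WW^\top\Gamma_1^\top)^{-1}\Gamma_1$ an oblique projector. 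Everything remaining is routine substitution: putting $(\Phi_{v,nc},\Phi_{w,nc})$ into $\|\mc Q\bs e\|_2^2$ with $\bs e=\Phi_{v,nc}\bs v+\Phi_{w,nc}\bs w$ produces the right-hand side of \eqref{eq_nc_cost}, and its equality with $\min_{\mc L}\mt{cost}$ is precisely the statement that the optimization ran over the entire feasible set \eqref{eq_model_error_cons}.
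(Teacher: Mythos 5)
Your route is genuinely different from the paper's. The paper does not re-derive the simultaneous optimality at all: it invokes the classical result (cited from \cite{book_sayed}) that the optimal noncausal estimator $\mc L_{nc}$ exists, is unique, and minimizes both the $\mc H_2$ and $\mc H_\infty$ criteria, maps it to a pair $(\Phi_{v,nc},\Phi_{w,nc})$ through \eqref{eq_L_to_phi_nc}, and then uses strict convexity of \eqref{eq_nc_prob_fro} (since $\mc Q$, $\Sigma_v$, $\Sigma_w \succ 0$) to conclude that this pair is the unique Frobenius minimizer and hence only \eqref{eq_nc_prob_fro} needs to be solved; \eqref{eq_nc_cost} then follows from \eqref{eq_trajectory_phi}. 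You instead prove the simultaneous optimality from scratch: eliminate $\Phi_w$ via \eqref{eq_model_error_cons} (using invertibility of $I-\mc Z\mc A$), reduce both problems to $\min_F\|G_0+FG_1\|$ with $G_1$ of full row rank, and observe that the least-squares residual $M$ satisfies $MK^\top=0$ for every feasible perturbation $K$, so that $(M+K)(M+K)^\top\succeq MM^\top$ and $F^\star$ minimizes every unitarily invariant norm at once. That argument is correct, self-contained, and yields an explicit closed form for $(\Phi_{v,nc},\Phi_{w,nc})$ that the paper never provides.

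There is, however, a genuine gap, which you flag but do not close: your argument only makes the two problems in \eqref{eq_nc_prob} share a minimizer if the Frobenius weight $\mt{blkdiag}(\Sigma_v^{1/2},\Sigma_w^{1/2})$ and the spectral weight $\mt{blkdiag}(\mc H_v^{-1},\mc H_w^{-1})$ have proportional Gram matrices, i.e. $\Sigma_v\propto\mc H_v^{-1}\mc H_v^{-\top}$ and $\Sigma_w\propto\mc H_w^{-1}\mc H_w^{-\top}$. Without such compatibility the claim of \eqref{eq_nc_prob_spec} does not follow and is in general false: over an affine family of the form $[x,\;1-x]$ the Frobenius problem with weight $I$ gives $x=\tfrac12$, while the spectral problem with weight $\mt{blkdiag}(2,1)$ gives $x=\tfrac15$, so the Frobenius minimizer need not lie in the spectral argmin when the weights are unrelated. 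The paper sidesteps this by citing the smoothing result of \cite{book_sayed}, where the $\mc H_\infty$ disturbance normalization is taken consistently with the $\mc H_2$ covariances; to complete your proof you must either assume or verify this compatibility between Assumption \ref{ass_bounded} and $(\Sigma_v,\Sigma_w)$. A second, smaller hole: the invertibility of $\Phi_{w,nc}$, needed for $\mc L_{nc}=\Phi_{w,nc}^{-1}\Phi_{v,nc}$ and not supplied by Theorem \ref{thm_sls} (whose unit-diagonal argument relies on the causal, lower block-triangular structure you have dropped), is asserted to be ``visible'' from your projector formula but never actually shown; in the paper's direction of travel it is immediate because $\Phi_{w,nc}$ is defined as an inverse in \eqref{eq_L_to_phi_nc}.
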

\begin{proof}
Using \eqref{eq_feas_cond}, one can always obtain $\{\Phi_{v,nc}, \Phi_{w,nc}\}$ from $\mc L_{nc}$ using
\begin{subequations} \label{eq_L_to_phi_nc}
\begin{align}
    \Phi_{w,nc} &= \lt( (I - \mc Z \mc A) - \mc L_{nc} \mc C \mc Z \rt)^{-1}, \\
    \Phi_{v,nc} &= \Phi_{w,nc}\mc L_{nc}.
\end{align}
\end{subequations}
The optimal non-causal estimator $\mc L_{nc}$ is a known result, which can be found in \cite{book_sayed}. It is unique and provides estimates that minimize both $\mc H_2$ and $\mc H_\infty$ costs. Using \eqref{eq_L_to_phi_nc}, this means that the corresponding pair of transfer functions $\{\Phi_{v,nc}, \Phi_{w,nc}\}$ exists, is unique, and solves both problems in \eqref{eq_nc_prob}.

The problem \eqref{eq_nc_prob_fro} is strictly convex because, by definition, $\mc Q \succ 0$, $\Sigma_v \succ 0$, and $\Sigma_w \succ 0$. Hence, its solution is unique. Since a solution solving both problems exists, and the solution to \eqref{eq_nc_prob_fro} is unique, we conclude that one only needs to solve \eqref{eq_nc_prob_fro}. Finally, \eqref{eq_nc_cost} follows from \eqref{eq_trajectory_phi}.
\end{proof}

\subsection{Regret optimal observer}
In this paper, we aim to minimize the worst-case dynamic regret, defined as %i.e. the difference between the worst-case cost of a given causal estimator $\mc L$ and a clairvoyant estimator $\mc L_{nc}$ that has access to all measurements in the horizon $0, \dots, T$ (see Section \ref{subsec_nc}). More formally, we define
\begin{align}\label{eq_regret_def}
    \!\!\mt{regret}(\mc L, T) \!= \!\max_{\substack{\bs v \in \bb V \\ \bs w \in \bb W}}\! \big(\mt{cost}(e(\bs v, \bs w, \mc L)) \!-\! \underbrace{\min_{\mc K}\mt{cost}(e(\bs v, \bs w, \mc K))}_{=\mt{cost}(e(\bs v, \bs w, \mc L_{nc}))} \!\! \big)\!.\!\!
\end{align}
\textcolor{bluetmp}{
With only the first term, \eqref{eq_regret_def} reduces to the $\mc H_\infty$ cost. The second term provides a system-dependent baseline, which represents the best performance achievable by the system. From Lemma \ref{lem_nc}, this baseline is provided by the clairvoyant estimator. For any noise pattern, if the clairvoyant observer gives a perfect estimate, the difference between the two costs in \eqref{eq_regret_def} is equal to the cost of $\mc L$ only. However, this difference decreases if the clairvoyant observer also generates some error. This allows the regret measure to reduce the importance of situations in which some error is unavoidable, leading to a less conservative estimator than $\mc H_\infty$. Moreover, since the clairvoyant estimator is $\mc H_2$ optimal (see \eqref{eq_nc_prob_fro}), minimizing the regret amounts to minimizing the suboptimality that a robust observer $\mc L$ induces.% Minimizing both conservativeness and suboptimality gives the minimal regret observer very good performance (see Section \ref{sec_results}).
}

%The substraction of the cost of the clairvoyant estimator allows the minimal regret observer to have a higher cost in situations where it could not perform better, i.e. when knowing the future does not help reducing the cost. This limits the conservativeness of the $\mc H_\infty$ cost.

Directly computing the minimal regret observer $\mc L_r = \argmin_{\mc L}\mt{regret}(\mc L, T)$ would be very complex due to the nested optimization problems, as well as the non-convexity of \eqref{eq_regret_def} in terms of $\mc L$. In the sequel, we compute $\mc L_r$ from $\Phi_{v}$ and $\Phi_{w}$, by adopting the approach used in \cite{martin2022safe} for designing regret-based controllers.

\begin{theorem}\label{thm_reg}
For quadratic costs and if either $\mc L_{nc}$ or $\Phi_{v,nc}$ and $\Phi_{w,nc}$ are known, the minimal regret observer is given by $\mc L_r = {\Phi_{w,r}}^{-1}\Phi_{v,r}$, where
\begin{align}\label{eq_reg_sol_def}
    \{\Phi_{v,r},\Phi_{w,r}\} = \argmin_{\Phi_v, \Phi_w} \;& \textnormal{regret}(\Phi_w^{-1} \Phi_v, T),
    \\ \nonumber
    \textnormal{s.t. }&\Phi_w^{-1} \Phi_v \textnormal{ lower block-triangular},
    \\ \label{eq_reg_sol_final}
    = \argmin_{\Phi_v, \Phi_w} \;& \min_\lambda \lambda,
\end{align}
%\vspace{-24px}
\begin{align*}
    \textnormal{s.t. } & \textnormal{ \eqref{eq_model_error_cons}, \eqref{eq_model_error_spars}}, \lambda > 0,
    \\
    &\lt[\matb I & \mc Q^\frac{1}{2} [\Phi_v H_v^{-1}, \Phi_w H_w^{-1}]
    \\
    [\Phi_v H_v^{-1}, \Phi_w H_w^{-1}]^\top \mc Q^\frac{1}{2}
    &
    \lambda I + \mc J_{nc} \mate\rt] \succeq 0,
\end{align*}
\vspace{5px}
where $\mc J_{nc} = \lt[\matb H_v^{-\!\top} \Phi_{v,nc}^\top \\ H_w^{-\!\top} \Phi_{w,nc}^\top \mate\rt] \mc Q \lt[\matb H_v^{-\!\top} \Phi_{v,nc}^\top \\ H_w^{-\!\top} \Phi_{w,nc}^\top \mate\rt]^{\!\!\top}$.
\end{theorem}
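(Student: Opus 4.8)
The plan is to reduce the theorem to two facts. First, by Theorem \ref{thm_sls} the relations \eqref{eq_def_phi} set up a bijection between achievable observers $\mc L$ and pairs $(\Phi_v,\Phi_w)$ satisfying the achievability constraint \eqref{eq_model_error_cons}, with inverse $\mc L=\Phi_w^{-1}\Phi_v$ and $\Phi_w$ invertible; moreover $\mc L$ is a causal (lower block-triangular) observer if and only if $\Phi_v,\Phi_w$ obey \eqref{eq_model_error_spars}. Hence minimizing $\mt{regret}(\mc L,T)$ over causal $\mc L$ coincides with minimizing $\mt{regret}(\Phi_w^{-1}\Phi_v,T)$ over pairs feasible for \eqref{eq_model_error_cons}--\eqref{eq_model_error_spars}, i.e.\ with \eqref{eq_reg_sol_def}, and the minimizer $(\Phi_{v,r},\Phi_{w,r})$ recovers $\mc L_r=\Phi_{w,r}^{-1}\Phi_{v,r}$. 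Second, and more substantively, I would show that this inner minimization is exactly the semidefinite program \eqref{eq_reg_sol_final}.

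For the second part, fix a feasible pair $(\Phi_v,\Phi_w)$. By \eqref{eq_trajectory_phi} and the quadratic form of the cost, $\mt{cost}(e(\bs v,\bs w,\mc L))=\big\|\mc Q^{\frac12}[\Phi_v,\Phi_w]\,[\bs v^{\!\top},\bs w^{\!\top}]^{\!\top}\big\|_2^2$, where $\mc Q^{\frac12}$ is the block-diagonal weight square root (so $\mc Q=(\mc Q^{\frac12})^2$); by Lemma \ref{lem_nc} the clairvoyant term in \eqref{eq_regret_def} equals $\big\|\mc Q^{\frac12}[\Phi_{v,nc},\Phi_{w,nc}]\,[\bs v^{\!\top},\bs w^{\!\top}]^{\!\top}\big\|_2^2$. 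I would then introduce the normalized disturbance $\bs\eta=[(\mc H_v\bs v)^{\!\top},(\mc H_w\bs w)^{\!\top}]^{\!\top}$, well defined because $\mc H_v,\mc H_w$ are invertible (Assumption \ref{ass_bounded}), so that the admissible set becomes the Euclidean unit ball $\|\bs\eta\|_2\leq1$. Setting $N=[\Phi_v\mc H_v^{-1},\Phi_w\mc H_w^{-1}]$ and noting that $\mc J_{nc}=[\Phi_{v,nc}\mc H_v^{-1},\Phi_{w,nc}\mc H_w^{-1}]^{\!\top}\mc Q\,[\Phi_{v,nc}\mc H_v^{-1},\Phi_{w,nc}\mc H_w^{-1}]$ is exactly the constant matrix in the theorem, \eqref{eq_regret_def} becomes
\begin{align*}
\mt{regret}(\Phi_w^{-1}\Phi_v,T)=\max_{\|\bs\eta\|_2\leq1}\bs\eta^{\!\top}\big(N^{\!\top}\mc Q\,N-\mc J_{nc}\big)\bs\eta .
\end{align*}
As the maximum of a single homogeneous quadratic form over the unit ball, this equals $\max\{0,\lambda_{\max}(N^{\!\top}\mc Q N-\mc J_{nc})\}$; in particular it is nonnegative, since the maximand vanishes at $\bs\eta=0$.

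Finally I would dualize this scalar maximization. With an epigraph variable $\lambda>0$, the bound $\mt{regret}(\Phi_w^{-1}\Phi_v,T)\leq\lambda$ is equivalent to $N^{\!\top}\mc Q N-\mc J_{nc}\preceq\lambda I$, i.e.\ to $\lambda I+\mc J_{nc}-(\mc Q^{\frac12}N)^{\!\top}(\mc Q^{\frac12}N)\succeq0$ --- the S-procedure being lossless here because there is a single quadratic constraint. A Schur complement with respect to the strictly positive definite leading block $I$ turns this into precisely the matrix inequality of \eqref{eq_reg_sol_final}, since $\mc Q^{\frac12}N=\mc Q^{\frac12}[\Phi_v\mc H_v^{-1},\Phi_w\mc H_w^{-1}]$. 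All data in that inequality are affine in $(\Phi_v,\Phi_w,\lambda)$: $\mc J_{nc}$ is a constant matrix, obtainable from $\mc L_{nc}$ (or directly from $\Phi_{v,nc},\Phi_{w,nc}$) via \eqref{eq_L_to_phi_nc}, and \eqref{eq_model_error_cons}--\eqref{eq_model_error_spars} are affine and sparsity constraints, so \eqref{eq_reg_sol_final} is an SDP whose optimal $\lambda$, for each fixed feasible pair, equals $\mt{regret}(\Phi_w^{-1}\Phi_v,T)$. Therefore the minimizers of \eqref{eq_reg_sol_final} coincide with those of \eqref{eq_reg_sol_def}, which would establish the theorem.

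The main obstacle is the passage from the worst-case-over-disturbances definition \eqref{eq_regret_def} to a single LMI. It rests on (a) writing both the achievable cost and the clairvoyant baseline as quadratic forms in the \emph{same} normalized vector $\bs\eta$ --- which is exactly what the stacked shaping matrices $\mc H_v,\mc H_w$ of Assumption \ref{ass_bounded} make possible, collapsing the two ellipsoidal uncertainty sets into one Euclidean ball --- and (b) the lossless S-procedure for a single quadratic constraint, i.e.\ the fact that $\max_{\|\bs\eta\|_2\leq1}\bs\eta^{\!\top}M\bs\eta\leq\lambda$ holds exactly when $M\preceq\lambda I$ (together with $\lambda\geq0$, which is automatic), so that no relaxation is incurred. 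A secondary point is to check that the blocks of $\mc J_{nc}$ are ordered consistently with $N^{\!\top}\mc Q N$, so that the difference of the two quadratic forms is formed correctly; this is where the explicit maps \eqref{eq_L_to_phi_nc} and the uniqueness of $(\Phi_{v,nc},\Phi_{w,nc})$ from Lemma \ref{lem_nc} enter. Once these are settled, the Schur-complement rewriting is routine.
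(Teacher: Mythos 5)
Your proposal is correct and follows essentially the same route as the paper: use Lemma \ref{lem_nc} to substitute the clairvoyant baseline, apply the change of variables $\bs v_n=\mc H_v\bs v$, $\bs w_n=\mc H_w\bs w$ to turn the regret \eqref{eq_regret_def} into the maximum of a quadratic form over the normalized disturbance ball, and then recast the minimization of its largest eigenvalue as the SDP \eqref{eq_reg_sol_final}. The only difference is that you spell out the steps the paper delegates to the reference on convex optimization (epigraph variable, losslessness of the single-constraint bound, Schur complement) and make explicit the appeal to Theorem \ref{thm_sls} for the equivalence between causal observers and feasible pairs $(\Phi_v,\Phi_w)$, which is a welcome but not substantively different elaboration.
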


\begin{proof}
According to Lemma \ref{lem_nc}, $\Phi_{v,nc}$ and $\Phi_{w,nc}$ give the second term of \eqref{eq_regret_def}, and can be computed from $\mc L_{nc}$. Hence \eqref{eq_regret_def} can be rewritten as
\begin{align}\label{eq_regret_def_eig}
    &\mt{regret}(\Phi_w^{-1} \Phi_v, T) = & \\ \nonumber
    &\quad \max_{\substack{\bs v \in \bb V \\ \bs w \in \bb W}} \lt[\matb v \\ w \mate\rt]^{\!\!\top} \!\! \!\Bigg(\! \lt[\matb \Phi_v^\top \\ \Phi_w^\top \mate\rt] \mc Q \lt[\matb \Phi_v^\top \\ \Phi_w^\top \mate\rt]^{\!\!\top}\!\!  -  \lt[\matb \Phi_{v,nc}^\top \\ \Phi_{w,nc}^\top \mate\rt] \mc Q \lt[\matb \Phi_{v,nc}^\top \\ \Phi_{w,nc}^\top \mate\rt]^{\!\!\top} \! \Bigg) \lt[\matb v \\ w \mate\rt] \!\!.
\end{align}
With the same change of variable as in \eqref{eq_hinf_opt_changed}, \eqref{eq_regret_def_eig} is equivalent to
\begin{align}\label{eq_regret_def_eig_2}
    &\mt{regret}(\Phi_w^{-1} \Phi_v, T) = & \\ \nonumber
    &\quad \max_{\substack{\|\bs v_n\| \leq 1 \\ \|\bs w_n\| \leq 1}} \lt[\matb v_n \\ w_n \mate\rt]^{\!\!\top} \!\! \bigg(\!
    \underbrace{\lt[\matb H_v^{-\!\top} \Phi_{v}^\top \\ H_w^{-\!\top} \Phi_{w}^\top \mate\rt] \mc Q \lt[\matb H_v^{-\!\top} \Phi_{v}^\top \\ H_w^{-\!\top} \Phi_{w}^\top \mate\rt]^{\!\!\top}
    - \mc J_{nc} }_{\mc M} \!\bigg)\!\! \lt[\matb v_n \\ w_n \mate\rt] \!\!.
\end{align}
Finding the solution of \eqref{eq_reg_sol_def} amounts to minimizing the largest eigenvalue of the matrix $\mc M$. From \cite{book_boyd}, such an eigenvalue problem is tractable and its solution is given by the semi-definite program \eqref{eq_reg_sol_final}, which proves the theorem.
\end{proof}

\vspace{-5px}

\cite{book_boyd} also show that the problem \eqref{eq_reg_sol_final} is convex and can be solved with any convex optimization methods, which makes it very convenient in practice. Note that the $\gamma$-suboptimal estimators presented in \cite{babak_estim} are cheaper to compute, but do not provide an exact solution. For time-invariant systems, the problems \eqref{eq_nc_prob_fro} and \eqref{eq_reg_sol_final} can be solved once offline, yielding the observer $\mc L_r = \Phi_{w,r}^{-1} \Phi_{v,r}$, which can be applied using \eqref{eq_model_def}. For time-varying systems, both problems must be solved every time the parameters change.

\section{Numerical experiments}\label{sec_results}
%\subsection{LTI systems under given disturbance patterns}
To show the benefits of minimal regret estimation, we will simulate the trajectories of the benchmark system "NN4" of order 4 from Complib (\cite{complib}) under nine different disturbance patterns. In order to assess the performance in practice, we also provide the average performance for the first three fifth order systems in Complib "AC1", "AC2", and "AC3", which model aircrafts. For convenience, we consider only LTI systems, even though our framework also encompasses LTV systems, see \eqref{eq_model_system}. All systems are discretized with $T_s = 5ms$ and the prediction horizon is $T=10$. \textcolor{bluetmp}{First, the optimal error maps $\Phi_v$ and $\Phi_w$ are computed for $\mc H_2$, $\mc H_\infty$ and minimal regret (denoted by $\mc R$) methods. Then, the performance is evaluated using \eqref{eq_trajectory_phi} and the cost $\|\bs e\|_2^2$ for each disturbance pattern.} Each element of the noises $\bs v$ and $\bs w$ follow independent uniform or Gaussian distributions in three cases, and deterministic sequences (e.g. a sine wave $w_t = [\sin(t), \dots, \sin(t)]^\top$) in the other six cases.

%\begingroup \belowcaptionskip=-5pt
\begin{table}[H]
    \captionsetup{width=.45\textwidth}
    \centering
     \begin{subtable}[]{0.28\textwidth}
    \centering
    \begin{tabular}{|c||c|c|c}
     \hline
    $\bs v, \bs w$ & $\mc H_2$ & $\mc H_\infty$ & $\mc R$ \\
     \hline
    \myrowcolour
     $\mc N(0,1)$ & \textcolor{mygreen}{\textbf{1}} & 0.83\% & 11\%
     \\
     $\mc U_{[0.5,1] }$ & 7.60\% & 8.34\% & \textcolor{mygreen}{\textbf{1}}
     \\
    \myrowcolour
     $\mc U_{[0,1] }$ & 6.52\% & 7.23\% & \textcolor{mygreen}{\textbf{1}}
     \\
     1 & 7.83\% & 8.55\% & \textcolor{mygreen}{\textbf{1}}
     \\
    \myrowcolour
     sin & 7.15\% & 7.64\% & \textcolor{mygreen}{\textbf{1}}
     \\
     \!\!sawtooth\!\! & 7.72\% & 8.51\% & \textcolor{mygreen}{\textbf{1}}
     \\
    \myrowcolour
     step & 6.90\% & 8.35\% & \textcolor{mygreen}{\textbf{1}}
     \\
     stairs & 5.03\% & 4.37\% & \textcolor{mygreen}{\textbf{1}}
     \\
    \myrowcolour
     worst & 0.24\% & \textcolor{mygreen}{\textbf{1}} & \!0.24\%\!
     \\
     \hline
    \end{tabular}
    \caption{NN4}
    \label{tab_nn4}
    \end{subtable}
     \begin{subtable}[]{0.2\textwidth}
    \centering
    \begin{tabular}{c|c|c|}
    \hline
    $\mc H_2$ & $\mc H_\infty$ & $\mc R$ \\
     \hline
    \myrowcolour
    \textcolor{mygreen}{\textbf{1}} & 1.44\% & 9.70\%\!\!
     \\
    6.33\% & 4.37\% & \textcolor{mygreen}{\textbf{1}}
     \\
    \myrowcolour
    5.47\% & 3.71\% & \textcolor{mygreen}{\textbf{1}}
     \\
    6.44\% & 4.46\% & \textcolor{mygreen}{\textbf{1}}
     \\
    \myrowcolour
    5.81\% & 4.09\% & \textcolor{mygreen}{\textbf{1}}
     \\
    6.34\% & 4.35\% & \textcolor{mygreen}{\textbf{1}}
     \\
    \myrowcolour
    6.68\% & 4.27\% & \textcolor{mygreen}{\textbf{1}}
     \\
    4.02\% & 2.71\% & \textcolor{mygreen}{\textbf{1}}
     \\
    \myrowcolour
    0.26\% & \textcolor{mygreen}{\textbf{1}} & 0.41\%\!\!
     \\
     \hline
    \end{tabular}
    \caption{average of AC1-3}
    \label{tab_ac3}
    \end{subtable}
    \caption{Relative difference in average estimation costs with $\mc Q = I$, for nine difference patterns. The performance for the stochastic patterns (i.e. the first three) is averaged over 1000 realizations. The best estimator is marked by \textcolor{mygreen}{\textbf{1}} and sets the baseline for the relative percentage increase. \vspace{-6px}}
    \label{tab_results}
\end{table}
%\endgroup

%TODO explain disturbance patterns
Table \ref{tab_results} shows that, as expected, $\mc H_2$ and $\mc H_\infty$ observers give the best performance for Gaussian and worst-case disturbances, respectively. However, observe that, unlike in the work of \cite{babak_estim}, the error of the $\mc R$ observer is not between that of $\mc H_2$ and $\mc H_\infty$. For all other disturbance patterns, the $\mc R$ observer achieves a significant improvement compared to the other two.

Note that the difference between the costs of $\mc H_2$ and $\mc H_\infty$ observers is relatively small compared to the difference between both costs and the one of the $\mc R$ observer. It shows that, for some systems, this new observer can provide a radically different alternative to the two state-of-the-art state estimation methods.

\section{Conclusions}\label{sec_conclu}
By using a new SLS parametrization, we provided a tractable formulation of the regret minimization problem, allowing one to compute the minimal regret observer easily. In multiple experiments, we showed that such an observer can be effective in reducing state estimation error when the disturbance and measurement noises are not Gaussian or adversarial. 

Even though linear systems are well studied, the effect of noise and disturbance distributions on estimation error is not yet fully understood. Namely, it is not trivial to find the boundaries between the sets of disturbance profiles where $\mc H_2$, $\mc H_\infty$, and minimal-regret observers are optimal. Finally, the observers proposed in the paper can provide the foundations for designing minimal-regret MHE methods for nonlinear and hybrid systems.

\bibliography{ifacconf}             % bib file to produce the bibliography

\begin{thebibliography}{22}
\providecommand{\natexlab}[1]{#1}
\providecommand{\url}[1]{\texttt{#1}}
\providecommand{\urlprefix}{URL }
\expandafter\ifx\csname urlstyle\endcsname\relax
  \providecommand{\doi}[1]{doi:\discretionary{}{}{}#1}\else
  \providecommand{\doi}{doi:\discretionary{}{}{}\begingroup
  \urlstyle{rm}\Url}\fi

\bibitem[{Agrawal et~al.(2021)Agrawal, Chen, and
  Jiang}]{agrawal_improving_regret_algo}
Agrawal, P., Chen, J., and Jiang, N. (2021).
\newblock Improved worst-case regret bounds for randomized least-squares value
  iteration.
\newblock In \emph{Proceedings of the AAAI Conference on Artificial
  Intelligence}, volume~35, 6566--6573.

\bibitem[{Alessandri et~al.(2008)Alessandri, Baglietto, and
  Battistelli}]{mhe_alessandri}
Alessandri, A., Baglietto, M., and Battistelli, G. (2008).
\newblock Moving-horizon state estimation for nonlinear discrete-time systems:
  New stability results and approximation schemes.
\newblock \emph{Automatica}, 44(7), 1753--1765.

\bibitem[{Anderson et~al.(2019)Anderson, Doyle, Low, and Matni}]{sls_og_paper}
Anderson, J., Doyle, J.C., Low, S.H., and Matni, N. (2019).
\newblock System level synthesis.
\newblock \emph{Annual Reviews in Control}, 47, 364--393.

\bibitem[{Bernstein and Haddad(1989)}]{cont_time_mixed_h2hinf}
Bernstein, D.S. and Haddad, W.M. (1989).
\newblock Steady-state kalman filtering with an {$H_\infty$} error bound.
\newblock \emph{Systems and Control Letters}, 12(1), 9--16.

\bibitem[{Boyd et~al.(1994)Boyd, El~Ghaoui, Feron, and
  Balakrishnan}]{book_boyd}
Boyd, S., El~Ghaoui, L., Feron, E., and Balakrishnan, V. (1994).
\newblock \emph{Linear matrix inequalities in system and control theory}.
\newblock SIAM.

\bibitem[{Didier et~al.(2022)Didier, Sieber, and Zeilinger}]{didier2022system}
Didier, A., Sieber, J., and Zeilinger, M.N. (2022).
\newblock A system level approach to regret optimal control.
\newblock \emph{IEEE Control Systems Letters}.

\bibitem[{Farina et~al.(2010)Farina, Ferrari-Trecate, and
  Scattolini}]{mhe_farina}
Farina, M., Ferrari-Trecate, G., and Scattolini, R. (2010).
\newblock Distributed moving horizon estimation for linear constrained systems.
\newblock \emph{IEEE Transactions on Automatic Control}, 55(11), 2462--2475.

\bibitem[{Ferrari-Trecate et~al.(2002)Ferrari-Trecate, Mignone, and
  Morari}]{mhe_ferrari2002}
Ferrari-Trecate, G., Mignone, D., and Morari, M. (2002).
\newblock Moving horizon estimation for hybrid systems.
\newblock \emph{IEEE transactions on automatic control}, 47(10), 1663--1676.

\bibitem[{Gharbi et~al.(2021)Gharbi, Gharesifard, and
  Ebenbauer}]{gharbi_mhe_regret_bound}
Gharbi, M., Gharesifard, B., and Ebenbauer, C. (2021).
\newblock Anytime proximity moving horizon estimation: Stability and regret for
  nonlinear systems.
\newblock In \emph{2021 60th IEEE Conference on Decision and Control (CDC)},
  728--735. IEEE.

\bibitem[{Goel and Hassibi(2021)}]{goel2021regret}
Goel, G. and Hassibi, B. (2021).
\newblock Regret-optimal estimation and control.
\newblock \emph{arXiv preprint arXiv:2106.12097}.

\bibitem[{Hassibi et~al.(1999)Hassibi, Sayed, and Kailath}]{book_sayed}
Hassibi, B., Sayed, A.H., and Kailath, T. (1999).
\newblock \emph{Indefinite-Quadratic estimation and control: a unified approach
  to $\mc H_2$ and $\mc H_\infty$ theories}.
\newblock SIAM.

\bibitem[{Kalman(1960)}]{kalman_og}
Kalman, R.E. (1960).
\newblock A new approach to linear filtering and prediction problems.
\newblock \emph{Transactions of the ASME--Journal of Basic Engineering},
  82(Series D), 35--45.

\bibitem[{Leibfritz(2004)}]{complib}
Leibfritz, F. (2004).
\newblock Compleib, constraint matrix-optimization problem library-a collection
  of test examples for nonlinear semidefinite programs, control system design
  and related problems.
\newblock \emph{Dept. Math., Univ. Trier, Trier, Germany, Tech. Rep}, 2004.

\bibitem[{Loomes and Sugden(1982)}]{regret_og}
Loomes, G. and Sugden, R. (1982).
\newblock Regret theory: An alternative theory of rational choice under
  uncertainty.
\newblock \emph{The economic journal}, 92(368), 805--824.

\bibitem[{Luenberger(1971)}]{luenberger}
Luenberger, D. (1971).
\newblock An introduction to observers.
\newblock \emph{IEEE Transactions on automatic control}, 16(6), 596--602.

\bibitem[{Martin et~al.(2022)Martin, Furieri, D{\"o}rfler, Lygeros, and
  Ferrari-Trecate}]{martin2022safe}
Martin, A., Furieri, L., D{\"o}rfler, F., Lygeros, J., and Ferrari-Trecate, G.
  (2022).
\newblock Safe control with minimal regret.
\newblock In \emph{Learning for Dynamics and Control Conference}, 726--738.
  PMLR.

\bibitem[{Ouhamma et~al.(2021)Ouhamma, Maillard, and
  Perchet}]{ouhamma_better_regret_ridge}
Ouhamma, R., Maillard, O.A., and Perchet, V. (2021).
\newblock Stochastic online linear regression: the forward algorithm to replace
  ridge.
\newblock \emph{Advances in Neural Information Processing Systems}, 34,
  24430--24441.

\bibitem[{Sabag and Hassibi(2021)}]{babak_estim}
Sabag, O. and Hassibi, B. (2021).
\newblock Regret-optimal filtering.
\newblock In \emph{International Conference on Artificial Intelligence and
  Statistics}, 2629--2637. PMLR.

\bibitem[{Simon(2006)}]{book_kalman_hing}
Simon, D. (2006).
\newblock \emph{Optimal state estimation: Kalman, H infinity, and nonlinear
  approaches}.
\newblock John Wiley \& Sons.

\bibitem[{Vanli et~al.(2015)Vanli, Donmez, and
  Kozat}]{vanli_better_regret_robust}
Vanli, N.D., Donmez, M.A., and Kozat, S.S. (2015).
\newblock Robust least squares methods under bounded data uncertainties.
\newblock \emph{Digital Signal Processing}, 36, 82--92.

\bibitem[{Yaz et~al.(1992)Yaz, Grigoriadis, and
  Skelton}]{disc_time_mixed_h2hinf}
Yaz, E., Grigoriadis, K., and Skelton, R. (1992).
\newblock Discrete-time {$H_2$/$H_\infty$} norm-bounding pole placement control
  and estimation.
\newblock In \emph{[1992] Proceedings of the 31st IEEE Conference on Decision
  and Control}, 487--492. IEEE.

\bibitem[{Zames(1981)}]{zames_hinf_og}
Zames, G. (1981).
\newblock Feedback and optimal sensitivity: Model reference transformations,
  multiplicative seminorms, and approximate inverses.
\newblock \emph{IEEE Transactions on automatic control}, 26(2), 301--320.

\end{thebibliography}
                                                     % with bibtex (preferred)
                                                   
%\begin{thebibliography}{xx}  % you can also add the bibliography by hand

%\bibitem[Able(1956)]{Abl:56}
%B.C. Able.
%\newblock Nucleic acid content of microscope.
%\newblock \emph{Nature}, 135:\penalty0 7--9, 1956.

%\bibitem[Able et~al.(1954)Able, Tagg, and Rush]{AbTaRu:54}
%B.C. Able, R.A. Tagg, and M.~Rush.
%\newblock Enzyme-catalyzed cellular transanimations.
%\newblock In A.F. Round, editor, \emph{Advances in Enzymology}, volume~2, pages
%  125--247. Academic Press, New York, 3rd edition, 1954.

%\bibitem[Keohane(1958)]{Keo:58}
%R.~Keohane.
%\newblock \emph{Power and Interdependence: World Politics in Transitions}.
%\newblock Little, Brown \& Co., Boston, 1958.

%\bibitem[Powers(1985)]{Pow:85}
%T.~Powers.
%\newblock Is there a way out?
%\newblock \emph{Harpers}, pages 35--47, June 1985.

%\bibitem[Soukhanov(1992)]{Heritage:92}
%A.~H. Soukhanov, editor.
%\newblock \emph{{The American Heritage. Dictionary of the American Language}}.
%\newblock Houghton Mifflin Company, 1992.

%\end{thebibliography}

\end{document}